\pgfplotsset{compat=1.15}
\def\eE{\mathbb E}
\def\bhcY{\mathcal{Y}^n}
\def\bhcS{\mathcal{S}^n}
\def\sCapa{\mathcal C_{\text{est}}^{\text{fad}}(\mathrm{P})}
\newtheorem{theorem}{Theorem}
\newtheorem{lemma}{Lemma}
\newtheorem{defn}{Definition}
\DeclareMathOperator{\Var}{Var}
\DeclareSymbolFont{bbold}{U}{bbold}{m}{n}
\DeclareSymbolFontAlphabet{\mathbbold}{bbold}
\def\eE{\mathbb E}
\def\bS{\mathbf S}
\def\bY{\mathbf Y}
\def\bhS{\hat{\mathbf S}}
\def\bZ{\mathbf Z}
\def\bhS{\hat{\mathbf S}}
\def\btS{\widetilde{\mathbf S}}
\def\bX{\mathbf X}
\def\bhcY{\mathcal{Y}^n}
\def\bhcS{\mathcal{S}^n}
\def\va{\boldsymbol{\alpha}}
\def\sCapa{\mathcal C_{\text{CR}}^{\text{fad}}(\mathrm{P})}
\newcommand{\be}{\begin{equation}}
\newcommand{\ee}{\end{equation}}
\newcommand{\ben}{\begin{equation*}}
\newcommand{\een}{\end{equation*}}
\newcommand{\ba}{\begin{eqnarray}}
\newcommand{\ea}{\end{eqnarray}}
\begin{document}
\title{State-Dependent Fading Gaussian Channel with Common Reconstruction Constraints}
\author{Viswanathan Ramachandran
\thanks{The author is with the Department of Electrical Engineering, Indian Institute of Technology (IIT), Jodhpur, India (email: vramachandran@iitj.ac.in).}
}
\maketitle

\begin{abstract}
The task of jointly communicating a message and reconstructing a common estimate of the channel state is examined for a fading Gaussian model with additive state interference. The state is an independent and identically distributed (i.i.d.) Gaussian sequence known non-causally at the transmitter, and the instantaneous fading coefficient is perfectly known at both the transmitter and the receiver. The receiver is required to decode the transmitted message and, in addition, reconstruct the state under a common reconstruction (CR) constraint ensuring that its estimate coincides with that at the transmitter. A complete characterization of the optimal rate–distortion tradeoff region for this setting is the main result of our work. The analytical results are also validated through numerical examples illustrating the rate–distortion and power-distortion tradeoffs.
\end{abstract}

\begin{IEEEkeywords}
Common reconstructions, rate-distortion trade-off, Gelfand-Pinsker coding, fading channels, dirty paper coding.
\end{IEEEkeywords}

\IEEEpeerreviewmaketitle

\section{Introduction}
State-dependent channels serve as a powerful model for communication scenarios where the channel behavior is influenced by an underlying random process that is only partially known. The achievable rate in such systems depends critically on how much of the state information is available at the transmitter and the receiver. A well-known instance is the case where the transmitter has non-causal state access while the receiver does not, originally analyzed by \cite{gel1980coding} for discrete memoryless channels and by \cite{costa1983writing} for the Gaussian counterpart.

Reference~\cite{sutivong2005channel} extended \cite{costa1983writing} to a scenario where the transmitter not only sends a message but also helps the receiver estimate the channel state. However, the receiver’s state estimate in that model is unknown to the transmitter. Building on this idea,~\cite{steinberg2009coding} introduced a common reconstruction (CR) constraint, ensuring identical reconstructions at both terminals. This is an important feature in applications such as medical or biometric data transmission (providing both entities with a shared reference for synchronization).

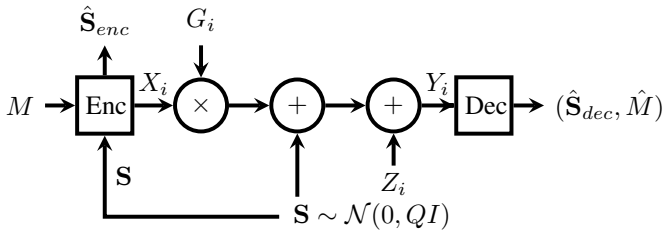
\begin{figure}[h]
\begin{center}
\begin{tikzpicture}[>=stealth,scale=0.8, line width=1.5pt]
\draw (0,0) node[rectangle, draw, text width=0.5cm,minimum height=0.75cm] (enc){Enc};
\draw (3.2,0) node[circle, draw] (state) {$+$};
\draw (1.6,0) node[circle, draw] (fading) {$\times$};
\draw (4.8,0) node[circle, draw] (n1) {$+$};
\draw (6.3,0) node[rectangle, draw, text width=0.5cm,minimum height=0.75cm] (dec1){Dec};
\draw[<-] (enc.west) --++(-0.5,0) node[left]{$M$};
\draw[->] (enc) --node[above]{$X_i$} (fading);
\draw[->] (fading) -- (state);
\draw[->] (state) --  (n1);
\draw[->] (n1) --node[above]{$Y_i$}++(1,0) --  (dec1);
\draw[->] (dec1) --++(1,0) node[right]{$(\bhS_{dec},\hat{M})$};
\draw (3.2,-1.85) node (st) {\phantom{S}};
\draw (st) ++(-0.25,0) node[right, rectangle, fill=white] {$\bS \sim \mathcal{N}(0,Q I)$}; 
\draw [->] (st) -- (state);
\draw [->] (st) -| node[right,pos=0.75]{$\bS$}(enc);
\draw[<-] (n1) --++(0,-1) node[below,yshift=0.05cm]{$Z_i$};
\draw[<-] (fading) --++(0,1) node[above,yshift=0.05cm]{$G_i$};
\draw[->] (enc) --++(0,1) node[above]{$\bhS_{enc}$};
\end{tikzpicture}
\end{center}
\label{fig:model}
\caption{State-dependent fading channel with Common Reconstructions\label{fig:bc2}}
\end{figure}

The studies in~\cite{steinberg2009coding} have inspired numerous extensions in recent years. The work of \cite{ramachandran2018state} investigates a state-dependent Gaussian broadcast channel (BC) where both receivers must form a CR of the additive state. On the source coding front,~\cite{ahmadi2013heegard} formulated the classical Heegard–Berger (HB) and cascade source coding problems under a CR constraint. Related extensions include the treatment of broadcasting with CR in relay networks~\cite{timo2010lossy}, and successive refinement under CR~\cite{vellambi2014successive}. The vector Gaussian counterpart of the HB setting was analyzed in~\cite{lu2021vector}, which established the rate–distortion region under CR. More recently,~\cite{adikari2022two} considered two-terminal source coding where each terminal observes one of two correlated sources and seeks to reconstruct their sum under a CR constraint. This was later generalized in~\cite{adikari2024common} to allow reconstruction of arbitrary common functions of the sources.

However, fading is a fundamental impairment in wireless channels that must be explicitly modeled. This paper addresses that aspect by studying joint message transmission and common reconstruction over a fading state-dependent channel. Prior work on such channels has largely ignored state estimation/ CR -- for instance, \cite{vaze2008dirty} characterized high-SNR rates with partial fading knowledge. A notable exception is the work of \cite{ramachandran2021joint}, which analyzed joint communication and state estimation for fading Gaussian channels with non-causal state information at the transmitter, but without CR. The present work extends the state estimation work of \cite{ramachandran2021joint} to CR constraints, and the framework of~\cite{steinberg2009coding} to fading scenarios. The principal contribution of this work is the complete characterization of the optimal trade-off between the message transmission rate and the state estimation distortion (Theorem~\ref{thm:mainN} in Sect.~\ref{sec:sys}). 

\emph{Notations:} Random variables are denoted by upper-case letters, with their realizations in lower case letters. 
A sequence $(A_1,A_2,\cdots,A_n)$ is denoted by $A^n$ or equivalently as the boldface vector $\mathbf{A}$. $||\cdot||$ denotes the vector Euclidean norm.

\section{Channel Model and Key Results} \label{sec:sys}
The system illustrated in Fig.~1 depicts a single-user state-dependent fading Gaussian channel. The transmitter has non-causal access to the state sequence $S^n$, while the instantaneous fading coefficient $G_i$, $i \in [1,2,\ldots,n]$, is perfectly known at both ends.
A block-fading model is considered, with the input constrained in average power across fading blocks. The central aim is to enable \emph{common reconstruction} (CR) of the state process, meaning that the receiver’s reconstruction of the state must exactly match the transmitter’s version. Alongside this requirement, the transmitter also communicates an independent message $M$ to the receiver. The problem is to characterize the optimal trade-off between the achievable rate $R$ and the mean distortion $D$ in reconstructing the state over the fading process. Each channel use is described by
\begin{align}
Y_i = G_i X_i + S_i + Z_i, \quad i \in [1,2,\ldots,n], \label{eq:model}
\end{align}
where $G^n$, $S^n$, and $Z^n$ denote the fading, state, and noise processes, respectively, and are mutually independent. The state sequence $S^n$ is available only at the encoder, whereas the fading gains $G_i$ are known to both terminals. Both $S_i$ and $Z_i$ are independent and identically distributed (i.i.d.) zero-mean Gaussian variables with variances $Q$ and $\sigma_z^2$, respectively.  

The message $M$ is uniformly distributed over the set $\{1,2,\ldots,2^{nR}\}$. The transmitter is subject to an average power constraint across channel uses and fading realizations:
\begin{equation*}
    \frac{1}{n} \, \mathbb{E}_{G}\!\left[\mathbb{E}_{S^n}\!\left[\sum_{i=1}^{n} X_i^2(m,G,S^n)\right]\right] \leq \mathrm{P}, 
    \:\: \forall \, m \in \{1,2,\ldots,2^{nR}\}.
\end{equation*}

\begin{defn}
An $(n,R,D,\epsilon)$ common reconstruction (CR) scheme is defined by an encoder map $\mathcal{E}:\{1,2,\cdots,2^{nR}\} \times \bhcS \times \mathcal{G} \to \mathcal{X}$, a sender quantization map $\phi:\bhcS \times \mathcal{G} \times \{1,2,\cdots,2^{nR}\} \to \mathbb R$, a decoding map $\psi:\bhcY \times \mathcal{G}^n \to \{1,2,\cdots, 2^{nR}\}$, and a receiver reconstruction map $\zeta: \bhcY \times \mathcal{G}^n \to \mathbb R^n$. For a message $M$ uniformly distributed over $\{1,2,\cdots,2^{nR}\}$, let $\bX=\{\mathcal{E}(M,\bS, G_i)\}_{i=1}^n$, $\bhS_{enc}=\{\phi(M,\bS,G_i)\}_{i=1}^n$, and $\bhS_{dec}=\zeta(\bY,G^n)$. The following conditions must then hold:
\begin{equation}
\eE{[||\bS-\zeta(\bY,G^n)||^2]} \leq n(D+\epsilon),
\end{equation}
\begin{equation}
\mathbb{P}(\psi(\bY,G^n) \neq M) \leq \epsilon,
\end{equation}
\begin{equation}
\mathbb{P}\left(\zeta(\bY,G^n) \neq \bhS_{enc} \right) \leq \epsilon,
\end{equation}
subject to the average power constraint $\eE ||\bX||^{2} \leq n\mathrm{P}$. 
\end{defn}

We say that a pair $(R,D)$ is \emph{achievable} if, for every $\epsilon > 0$, there exists an $(n,R,D,\epsilon)$ common reconstruction scheme, possibly for sufficiently large $n$. The capacity region $\sCapa$ is defined as the convex closure of all achievable $(R,D)$ pairs, with $0 \leq D \leq Q$. For notational convenience, we parameterize $\sCapa$ in terms of three parameters $(\rho_1,\rho_2,d)$, where $\bar{\rho} = (\rho_1, \rho_2)$ satisfies $\rho_1^2 + \rho_2^2 \leq 1$, and $d$ satisfies $0 \leq d \leq Q$. The quantity $R(\bar{\rho}, d)$ is defined as in~\eqref{eq:ra} at the beginning of the next page. Let $\kappa$ denote the set of all $(\rho_1, \rho_2, d)$ such that $R(\bar{\rho}, d) \geq 0$. 
The main result of this paper is presented next.

\begin{theorem} \label{thm:mainN}
The capacity region $\sCapa$ is completely characterized by the convex hull in $\mathbb R_+^2$ of 
\begin{align*}
\underset{D \in \kappa}{\bigcup} \left\{ \left( R, D \right)\Bigg\vert \begin{aligned} &\exists \:\: (\rho_1,\rho_2,{d}) \in \kappa \:\: \textup{s.t.}\:\: R \leq \eE_{G}[ R(\rho_1,\rho_2, d)],\\
& D \geq d \end{aligned}   \right\},
\end{align*}
where the expectation is over the fading distribution, along with the power constraint $\mathbb{E}_{G}[P(G)] \leq \mathrm{P}$.
\end{theorem}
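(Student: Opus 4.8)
The plan is to prove Theorem~\ref{thm:mainN} in two movements: a coding theorem for achievability and a matching converse, with a Gaussian optimization step identifying $R(\bar\rho,d)$ as the extremal single-letter expression.

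\emph{Achievability.} The scheme combines dirty-paper (Gelfand--Pinsker) coding for the message with a state-amplification layer for the reconstruction, all parameterized per fading level. Fix a power allocation $P(g)$ with $\eE_G[P(G)]\le\mathrm P$ and, for each $g$, a jointly Gaussian test channel in which the channel input splits as $X=X_a+X_d$: an \emph{analog} part $X_a$ linear in $S$, whose relative strength is set by $\rho_1$ and which renders $S$ easier to reconstruct at the decoder, and a \emph{dirty-paper} part $X_d$ that uses the remaining power to carry $M$ while treating $S$ and $X_a$ as known interference, with Costa auxiliary $U=X_d+\alpha(g)(S+X_a)$ and inflation reparameterized through $\rho_2$; the condition $\rho_1^2+\rho_2^2\le1$ is precisely the power-split budget. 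The crucial point for the CR constraint is that the receiver's state estimate is taken to be the MMSE estimate of $\bS$ \emph{given only what it recovers losslessly} --- namely the decoded auxiliary codeword $U^n$ and the known fading $G^n$ --- so that the transmitter, which also holds $U^n$, can set $\bhS_{enc}$ to exactly that deterministic function, whence $\mathbb{P}(\bhS_{dec}\ne\bhS_{enc})\to0$ follows automatically once $(M,U^n)$ is decoded correctly. A standard joint-typicality covering/packing argument then certifies message rate $\eE_G[I(U;Y\mid G)-I(U;S\mid G)]$ together with per-block distortion $\eE_G[\Var(S\mid U,G)]$; a finer reconstruction is obtained by making $U$ (or an appended Gaussian digital layer describing $S$, possibly Wyner--Ziv binned against $\bY$ but with the reconstruction still read off the recovered codeword, so CR persists) more informative about $S$, which is feasible for every $d$ with $(\rho_1,\rho_2,d)\in\kappa$. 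Evaluating the Gaussian mutual informations and conditional variance and collecting terms reproduces $R(\bar\rho,d)$, and the convex hull in the statement follows by time-sharing over $(\rho_1,\rho_2,d)$ and over $P(\cdot)$.

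\emph{Converse.} Given an arbitrary sequence of $(n,R,D,\epsilon)$ schemes with $\epsilon\to0$, the key is to exploit the CR constraint: since $\bhS_{dec}$ equals $\bhS_{enc}=\{\phi(M,\bS,G_i)\}_{i=1}^n$ with high probability, the receiver recovers $M$ \emph{together with} the quantized state sequence $\bhS_{enc}$, and $\bhS_{dec}$ may be replaced by $\bhS_{enc}$ in the distortion requirement at vanishing cost, so $\tfrac1n\sum_i\eE[(S_i-\hat S_{enc,i})^2]\le D+\epsilon$. A Gelfand--Pinsker-style converse applied to the augmented message $(M,\bhS_{enc})$ --- Fano's inequality followed by the Csisz\'ar sum identity, with auxiliary $U_i=(M,Y_{i+1}^n,S^{i-1},G^n)$ set up so that $\hat S_{enc,i}$ is a component of $U_i$ --- yields $R\le\tfrac1n\sum_i\big[I(U_i;Y_i\mid G_i)-I(U_i;S_i\mid G_i)\big]+\epsilon_n$, while, since $\hat S_{enc,i}$ is contained in $U_i$, the distortion constraint forces $d_i:=\eE[\Var(S_i\mid U_i,G_i)]\le\eE[(S_i-\hat S_{enc,i})^2]$ and hence $\tfrac1n\sum_i d_i\le D+\epsilon$. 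Conditioning on $G_i=g$, a conditional entropy-power-inequality / MMSE argument then bounds each bracketed term by the Gaussian expression $R(\bar\rho_i,d_i)$, where $\bar\rho_i=(\rho_{1,i},\rho_{2,i})$ obeys $\rho_{1,i}^2+\rho_{2,i}^2\le1$ and $P_i(g):=\eE[X_i^2\mid G_i=g]$ satisfies $\tfrac1n\sum_i\eE_G[P_i(G)]\le\mathrm P$. Finally, writing $\bar\rho=\tfrac1n\sum_i\bar\rho_i$, $d=\tfrac1n\sum_i d_i$, $P(g)=\tfrac1n\sum_i P_i(g)$ and invoking concavity of $(\bar\rho,d)\mapsto R(\bar\rho,d)$ for each fixed $g$, together with Jensen over the block index and over $G$, collapses the per-block inequalities to $R\le\eE_G[R(\bar\rho,d)]$ with $\eE_G[P(G)]\le\mathrm P$ and $D\ge d$ --- the region of the theorem.

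\emph{Main obstacle.} The hardest step is the Gaussian part of the converse: showing that, conditionally on each fading value, neither a non-Gaussian auxiliary nor a nonlinear reconstruction can simultaneously improve on the linear amplify-and-dirty-paper construction in both the rate term and the distortion term, i.e., that $R(\bar\rho,d)$ is genuinely extremal. I expect this to require a careful entropy-power-inequality chain (or the ``MMSE''/Bross--Lapidoth--Wigger-type technique used for Gaussian dirty-paper converses), arranged so that the two-parameter budget $\rho_1^2+\rho_2^2\le1$ emerges from a covariance/Cauchy--Schwarz step rather than being imposed by hand. Secondary difficulties are making the ``$\bhS_{enc}$ is recovered at the decoder'' reduction rigorous with a continuous-alphabet quantizer, and verifying the concavity needed to interchange the single-letter optimization with $\eE_G[\cdot]$ and with the power-allocation constraint. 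It is worth noting that the CR constraint actually aids the converse --- it pins the reconstruction to information common to both terminals --- so the real labor lies in turning that structural fact into the single-letter inequalities above.
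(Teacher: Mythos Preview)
Your achievability sketch is plausible but takes a different parameterization than the paper's. The paper does \emph{not} split $X=X_a+X_d$ with a Costa auxiliary $U=X_d+\alpha(S+X_a)$. Instead it writes $S=U+T$ with $U\perp T$ independent Gaussians of variances $Q-d$ and $d$, and takes $X=\rho_1\sqrt{P(g)/(Q-d)}\,U+\rho_2\sqrt{P(g)/d}\,T$. Here $U$ is simultaneously the Gelfand--Pinsker auxiliary and the reconstruction (since $\eE[S\mid U]=U$), so the CR property and the distortion $d$ fall out immediately, and evaluating $I(U;Y)-I(U;S)$ gives $R(\bar\rho,d)$ in one step. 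Your amplify-and-dirty-paper construction may well trace out the same boundary, but you would have to show that your two-parameter family actually reproduces the specific expression~\eqref{eq:ra}; as written this is asserted, not checked.

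The converse has a genuine gap. With $U_i=(M,Y_{i+1}^n,S^{i-1},G^n)$, the claim that $\hat S_{enc,i}$ is a component of $U_i$ is false: $\hat S_{enc,i}=\phi(M,\bS,G_i)$ depends on the \emph{entire} state sequence $S^n$, not just $S^{i-1}$, so it is not determined by your $U_i$. This breaks the step linking the distortion constraint to $\Var(S_i\mid U_i)$. The paper avoids the Csisz\'ar-sum route entirely: it sets $\btS=(M,\bhS)$ as a \emph{block} auxiliary, derives $nR\le I(\btS;\bY\mid G^n)-I(\btS;\bS\mid G^n)+n\epsilon_n$ directly from Fano and the identity $H(M)=H(\btS)-I(\btS;\bS)$ (using $M\perp\bS$ and $\bhS=\phi(M,\bS,G^n)$), and then single-letterizes by expanding $h(\bY)$, $h(\bS\mid\btS,\bY)$ via the chain rule --- at which point $\hat S_i$ is available as a conditioning variable because it sits inside $\btS$. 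A second issue: you rely on concavity of $(\bar\rho,d)\mapsto R(\bar\rho,d)$ to Jensen over the block index, but this concavity is neither proved nor obvious. The paper instead applies Jensen to the concave map $x\mapsto\tfrac12\log(2\pi e x)$ at the level of \emph{second moments}, introduces the empirical covariance $K=\tfrac1n\sum_iK_i$ of $(X_i,\hat S_i,S_i-\hat S_i)$, and obtains $\rho_1^2+\rho_2^2\le1$ from positive semidefiniteness of $K$ --- exactly the Cauchy--Schwarz step you anticipated, but executed on the averaged covariance rather than on a per-letter auxiliary.
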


\begin{figure*}
\hrule
\begin{gather}
R(\bar{\rho},d) = \frac{1}{2}\log\Bigg(\frac{d(G^2 P(G)+Q+\sigma_z^2+ 2 G\rho_1\sqrt{P(G)(Q-d)}+2G \rho_2\sqrt{P(G) d})}{Q((1-\rho_1^2) G^2 P(G)
+d+\sigma_z^2+2G\rho_2\sqrt{P(G)d})
}\Bigg). \label{eq:ra}
\end{gather}
\hrule
\end{figure*}
\begin{proof}
The achievability proof is presented in Section~\ref{sec:achieve}, while the converse is established in Section~\ref{sec:conv}.
\end{proof}
%


The following mild assumption is used in our converse:
\begin{equation}
H(\bhS_{enc}|\bhS_{dec}) \leq n \epsilon_n, \tag{T1} \label{eq:tec}
\end{equation}
where $\epsilon_n \to 0$ as $n \to \infty$. This entails no loss of generality and allows setting $\bhS_{enc} = \bhS_{dec} = \bhS$ in the proof. Intuitively, \eqref{eq:tec} restricts attention to an exponential number of agreed reconstructions. In particular, if a scheme without reproduction cardinality bounds achieves $(R,D)$, then for any $\delta > 0$, the pair $(R,D+\delta)$ can be achieved using reconstructions from an alphabet of size $2^{nc(\delta)}$. This follows, for example, by quantizing the reconstructions using a scalar quantizer with distortion $\delta$ and rate scaling as $2^{nc(\delta)}$.

\section{Achievability proof of Theorem~\ref{thm:mainN}} \label{sec:achieve}
For a given fading instantiation $G = g$, a channel use is
\begin{align}
Y = g X + S + Z, \label{eq:modelsplit}
\end{align}
with $\mathbb{E}[X^2] \leq P(G)$ and $\mathbb{E}_{G}[P(G)] \leq \mathrm{P}$.
The achievability of $\sCapa$ follows from standard Gaussian random coding arguments using the Gel’fand–Pinsker (GP) framework~\cite{gel1980coding}. An auxiliary variable $U$ is introduced to jointly handle message transmission and state reconstruction: the sequence $U^n$ decoded at the receiver simultaneously carries the information message and enables common reconstruction of the state.

\begin{lemma} \label{lem:achieve}
For any distribution $p(u,x|s)$ satisfying ${I(U;Y) \geq I(U;S)}$ and $\mathbb{E}[X^2] \leq P(G)$, all rate–distortion pairs $(R,D)$ meeting the following constraints are achievable:
\begin{align}
&0 \leq  R \leq I(U;Y) - I(U;S), \label{eq:achieve:rate} \\
&D  \geq \eE \left( S - \eE \left[ S|U \right] \right)^2. \label{eq:achieve:dist}
\end{align}
\end{lemma}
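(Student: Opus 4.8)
The plan is a Gel'fand--Pinsker random-binning argument in which the auxiliary codeword $U^n$ does double duty: its bin index carries the message, while the sequence $U^n$ itself is the object from which \emph{both} terminals form their state estimate, so the common-reconstruction requirement reduces to the statement that the decoder recovers exactly the $U^n$ chosen by the encoder. Fix $p(u,x\mid s)$ with $I(U;Y)\ge I(U;S)$ and $\E[X^2]\le P(G)$, and let $\eta(u):=\E[S\mid U=u]$. Generate $2^{n(R+I(U;S)+\delta)}$ sequences $U^n(w,m)$ i.i.d.\ from the marginal $p_U$, with $m\in\{1,\dots,2^{nR}\}$ and $w\in\{1,\dots,2^{n(I(U;S)+\delta)}\}$, and partition them into $2^{nR}$ bins according to $m$.

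First I would describe encoding and decoding. Given $(m,\bS)$, the encoder searches bin $m$ for an index $w$ with $(U^n(w,m),\bS)$ jointly typical; by the covering lemma this succeeds with probability tending to one, since each bin holds more than $2^{nI(U;S)}$ candidates. It then transmits $\bX$ drawn from $\prod_i p(x_i\mid U_i(w,m),S_i)$ and records $\hat S_{enc,i}=\eta(U_i(w,m))$. The decoder searches for the unique $(\hat w,\hat m)$ with $(U^n(\hat w,\hat m),\bY)$ jointly typical, outputs $\hat m$, and sets $\hat S_{dec,i}=\eta(U_i(\hat w,\hat m))$; by the packing lemma this is correct with probability tending to one provided $R+I(U;S)+\delta<I(U;Y)$, which is \eqref{eq:achieve:rate} up to $\delta$. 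On the intersection $\mathcal F$ of the event that encoding succeeds and the event that decoding is correct --- an event of probability tending to one --- we have $(\hat w,\hat m)=(w,m)$, hence $\bhS_{dec}=\bhS_{enc}$, so the CR condition $\mathbb P(\zeta(\bY,G^n)\ne\bhS_{enc})\le\epsilon$ holds; the message-error condition follows from the same union bound, and the power condition from the fact that the ensemble-average input power is already $\le P(G)$ together with a standard small power back-off and expurgation of the few codewords that violate the constraint.

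Next I would bound the distortion. On $\mathcal F$ the pair $(U^n,\bS)$ is jointly typical, so the typical-average lemma gives $\tfrac1n\|\bS-\bhS_{dec}\|^2=\tfrac1n\sum_i(S_i-\eta(U_i))^2\le\E[(S-\E[S\mid U])^2]+\delta'$, matching \eqref{eq:achieve:dist}. On $\mathcal F^c$ I would let the receiver output $\bhS_{dec}=0$, so the extra distortion is $\tfrac1n\,\E[\|\bS\|^2\,\mathbf 1_{\mathcal F^c}]$; since $\bS$ is Gaussian and independent of the codebook randomness while $\mathbb P(\mathcal F^c)\to0$, a Cauchy--Schwarz (or truncation) estimate shows this term vanishes. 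Averaging over the random codebook then produces, for every $\epsilon>0$ and $n$ large, an $(n,R,D,\epsilon)$ CR scheme for any $(R,D)$ obeying \eqref{eq:achieve:rate}--\eqref{eq:achieve:dist}.

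The main obstacle is the continuous-alphabet bookkeeping: $U,S,X,Y$ are real-valued, so the covering, packing, and typical-average lemmas used above must be justified either by discretizing $U$ on a fine grid, running the finite-alphabet argument, and passing to a limit, or by appealing directly to the Gaussian Gel'fand--Pinsker / dirty-paper machinery; closely tied to this is controlling the distortion contribution of the atypical event for the unbounded Gaussian source, which is the only step that is not purely mechanical. With the lemma in place, Theorem~\ref{thm:mainN} follows by taking $p(u,x\mid s)$ jointly Gaussian and parametrized by $(\rho_1,\rho_2,d)$ --- for which $I(U;Y)-I(U;S)$ evaluates to $R(\bar\rho,d)$ of \eqref{eq:ra} and $\E[(S-\E[S\mid U])^2]=d$ --- and then concatenating the per-fading-block codes and invoking $\E_G[P(G)]\le\mathrm P$.
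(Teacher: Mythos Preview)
Your proposal is correct and follows essentially the same approach as the paper: a Gel'fand--Pinsker random-binning scheme in which both terminals form the state estimate as a per-letter function of the decoded auxiliary sequence $U^n$, so that common reconstruction is automatic once $U^n$ is reliably decoded. The paper's own proof is merely a pointer to \cite[Theorem~2]{steinberg2009coding} together with the remark that the construction adapts to the Gaussian/squared-error setting, and you have spelled out precisely that adaptation---including the continuous-alphabet and unbounded-distortion caveats, which are indeed the only non-mechanical points.
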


\begin{proof}
The discrete memoryless version of this coding scheme appears in~\cite[Theorem~2]{steinberg2009coding}. Lemma~\ref{lem:achieve} adapts that construction to the Gaussian case with squared-error distortion. Under the rate constraints of Lemma~\ref{lem:achieve}, $u^n$ can be decoded with vanishing probability of error, and the desired distortion is achieved by forming the decoder’s estimate on a per-letter basis.  
\end{proof}

Let $(\rho_1, \rho_2, d)$ be such that $(\rho_1, \rho_2, d) \in \kappa$, i.e., $\rho_1^2 + \rho_2^2 \leq 1$, $0 \leq d \leq Q$, and the rate function $R(\bar{\rho}, d)$ in~\eqref{eq:ra} is non-negative. Let us choose the conditional distribution $p(u,x|s)$ employed in Lemma~\ref{lem:achieve}.
We express the state variable as
\begin{equation}
S = U + T, \label{eqs}
\end{equation}
where $U$ and $T$ are Gaussian random variables (independent) with zero mean and variances $\Var(U) = Q - d$ and $\Var(T) = d$. This construction determines $p(u|s)$. The conditional distribution $p(x|u,s)$ is then defined by
\begin{align} \label{eq:inp:1}
X = \rho_1 \sqrt{\frac{P(g)}{\Var(U)}} U
+ \rho_2 \sqrt{\frac{P(g)}{\Var(T)}} T.
\end{align}
Since $\rho_1^2 + \rho_2^2 \leq 1$, it follows that $\mathbb{E}[X^2] \leq P(g)$. For reference in the outer bound analysis, we note that the covariance matrix of the jointly Gaussian $(X, U, T)$ defined above is
\begin{equation}
\left(
\begin{array}{ccc}
K_{00} & K_{01} & K_{02}\\
K_{01} & K_{11} & 0\\
K_{02} & 0 & K_{22}
\end{array}
\right),
\end{equation}
where $K_{00} \leq P(G)$, $K_{11} = Q - d$, $K_{22} = d$, and $\rho_k = K_{0k}/\sqrt{K_{00}K_{kk}}$ for $k = 1, 2$. Now consider
\begin{align}
    R &= I(U;Y) - I(U;S) \notag\\
      &= h(Y) - h(Y|U) - h(S) + h(S|U). \label{eq:achieve:rb:1}
\end{align}
Since $(X,U,S)$ are jointly Gaussian, and defining the function $f(x) = \tfrac{1}{2}\log(2\pi e x)$, we can write
\begin{align}  \label{eq:hat:ps:2}
    &h(Y|U) = f(\sigma_{Y|U}^2) \triangleq f\bigl(\min_{\alpha} \mathbb{E}\!\left[(Y - \alpha U)^2\right]\bigr), \\
    &\phantom{ww}= (1 - \rho_1^2)G^2 P(G)
       + d + \sigma_z^2
       + 2G\rho_2\sqrt{P(G)d}.
\end{align}
Similarly, $h(Y)$ and $h(S|U)$ can be computed, and substituting these expressions into~\eqref{eq:achieve:rb:1} yields~\eqref{eq:ra}. 
Since $(\rho_1, \rho_2, d) \in \kappa$ ensures that $R \geq 0$, the chosen $p(u,x|s)$ satisfies $I(U;Y) \geq I(U;S)$ as required in Lemma~\ref{lem:achieve}. 
Furthermore, from the choice of $p(u|s)$, we have $\mathbb{E}\!\left[\left(S - \mathbb{E}[S|U]\right)^2\right] = d.$ 

\section{Converse proof of Theorem~\ref{thm:mainN}} \label{sec:conv}
\def\va{\boldsymbol{\alpha}}%

\def\neps{n \epsilon_n}
If a rate–distortion pair $(R,D)$ is achievable, then for any $\epsilon > 0$ there exists an $(n,R,D,\epsilon)$ CR scheme.
For such a scheme, define $K_i$ as the covariance matrix of the vector $(X_i,\hat{S}_{i},S_i-\hat{S}_{i})$ under the fading instantiation $G_i=G$, for each $i=1,2,\ldots,n$. The average covariance matrix is then defined as $K=\frac{1}{n}\sum_{i=1}^{n}K_i$. Without loss of generality, we restrict attention to zero-mean random variables that satisfy standard orthogonality relations below; any deviation from these would only increase the transmit power or distortions without improving the achievable rate or reliability:
\begin{align*}
\eE [ (S_i-\hat{S}_{i})\hat{S}_{i} ] &= 0, \:\: i=1,2,\ldots,n.
\end{align*}
From these definitions, the covariance matrix is:
\begin{equation}
    K =
    \left(
    \begin{array}{ccc}
        K_{00} & K_{01} & K_{02}\\
        K_{01} & K_{11} & 0\\
        K_{02} & 0 & K_{22}
    \end{array}
    \right), \label{eq:Kconverse}
\end{equation}
where the following conditions arise from the orthogonality relationships and the average power constraint:
\begin{gather}
    K_{00} \leq P(G), \quad K_{11} + K_{22} = Q, \label{eq:ConverseConditions1}\\
    K_{22} \leq D + \epsilon. \label{eq:ConverseConditions2}
\end{gather}
Define $\rho_k = K_{0k}/\sqrt{K_{00}K_{kk}}$ for $k = 1,2$, interpreting $\tfrac{0}{0}$ as~0. 
Since $K$ is positive semi-definite, we have ${\bf a}^T K {\bf a} \geq 0$ for any vector~${\bf a}$. 
Choosing ${\bf a}^T = (-1, \tfrac{K_{01}}{K_{11}}, \tfrac{K_{02}}{K_{22}})$, we obtain:
\begin{align}
    \rho_1^2 + \rho_2^2 \leq 1. \label{eq:ConverseConditions5}
\end{align}
Let $\bhS$, which depends on $(\bS, M, G^n)$, represent the encoder’s quantized reconstruction of the state sequence. Invoking the assumption in~\eqref{eq:tec} and applying Fano’s inequality, we obtain
\begin{align}
    H(\bhS,M \mid \bY) \leq n \epsilon_n, \label{eq:fano:1}
\end{align}
where $\epsilon_n \to 0$ as $n \to \infty$. 
We denote the pair $(\bhS, M)$ by $\btS$, and proceed to establish the following chain of inequalities.
\begin{align}
&nR \stackrel{(a)}=   H(M) \stackrel{(b)}= H(M|G^n) \notag\\
    &\phantom{w}= H(\bhS,M|G^n) - H(\bhS|M,G^n) \notag\\
    &\phantom{w}\stackrel{(c)}= H(\bhS,M|G^n) - H(\bhS|M,G^n)+H(\bhS|M,\bS,G^n) \notag\\
     &\phantom{w}= H(\bhS,M|G^n) - I(\bhS;\bS|M,G^n) \notag\\
     &\phantom{w}\stackrel{(d)}= H(\bhS,M|G^n) \!-\! I(\bhS,M;\bS|G^n) = H(\btS|G^n) \!-\! I(\btS;\bS|G^n) \notag\\
     &\phantom{w}= H(\btS|G^n)-H(\btS|\bY,G^n)+H(\btS|\bY,G^n) - I(\btS;\bS|G^n) \notag\\
     &\phantom{w}\stackrel{(e)}\leq I(\btS;\bY|G^n) + H(\btS|\bY) - I(\btS;\bS|G^n) \notag\\
     &\phantom{w}\stackrel{(f)}\leq I(\btS;\bY|G^n) - I(\btS;\bS|G^n) + n\epsilon_n \notag\\
&\phantom{w}= h(\bY|G^n) - h(\bY|\btS,G^n)- h(\bS)  +  h(\bS|\btS,G^n)+ n\epsilon_n \notag\\
&\phantom{w}= h(\bY|G^n) - h(\bY|\btS,\bS,G^n)- h(\bS)+  h(\bS|\btS,\bY,G^n)+ n\epsilon_n \notag\\
&\phantom{w}\stackrel{(g)}\leq h(\bY|G^n) - h(\bZ)- h(\bS)+  h(\bS|\bhS,\bY,G^n)+ n\epsilon_n \notag\\
&\phantom{w}\stackrel{(h)}\leq \sum_{i=1}^n \Bigl(h(Y_i|G_i)-h(Z_i)-h(S_i)+h(S_i|\hat{S}_i,Y_i,G_i)\Bigr) + n\epsilon_n \notag\\
&\phantom{w}= \eE_{G}\Bigl[\sum_{i=1}^n \Bigl(h(Y_i|G_i=G)-h(Z_i)-h(S_i) \notag\\
&\phantom{wwwwwwww}+h(S_i|\hat{S}_i,Y_i,G_i=G)\Bigr)\Bigr] + n\epsilon_n,
	\label{eq:r1mur2:1}
\end{align}

\begin{figure*}
\hrule
\begin{align}
R &\leq \frac{1}{2}\log\Bigg(\frac{K_{22}(G^2 K_{00} +K_{11}+K_{22}+\sigma_z^2+  2G \rho_1\sqrt{K_{00}K_{11}} +  2G \rho_2\sqrt{K_{00}K_{22}})}{(K_{11}+K_{22})(G^2 (1-\rho_1^2) K_{00} +K_{22}+\sigma_z^2 +  2G \rho_2\sqrt{K_{00}K_{22}})
}\Bigg).
\label{eq:ConverseRupperbound}
\end{align}
\hrule
\end{figure*}

\noindent where (a) follows since $M$ is uniformly distributed on $\{1,2,\ldots,2^{nR}\}$, (b) follows from the independence of $M$ and $G^n$, (c) follows since $\bhS$ is determined by $(M,\bS,G^n)$, (d) follows from the independence of $M$ and $(\bS,G^n)$, (e) follows since conditioning does not increase the entropy, (f) follows by Fano's inequality~\eqref{eq:fano:1}, (g) follows since given $(\bS,M,G^n)$, the residual uncertainty in $\bY$ is only that of the noise $\bZ$, and (h) follows from the independence bound on entropy.

To further upper bound~\eqref{eq:r1mur2:1}, we define the function $f(x) = \tfrac{1}{2}\log(2\pi e x)$. 
Then, for any real numbers $(\alpha_1, \alpha_2)$, the maximum entropy property of Gaussian random variables yields
\begin{align*}
    &\sum_{i=1}^n h(S_i \mid \hat{S}_i, Y_i, G_i = G) 
    \leq \sum_{i=1}^n h(S_i - \alpha_1 \hat{S}_i - \alpha_2 Y_i) \notag\\
    &\leq \sum_{i=1}^n 
    f\Bigl(
        \mathbb{E}[S_i^2] 
        + \alpha_1^2 \mathbb{E}[\hat{S}_i^2]
        + \alpha_2^2 \mathbb{E}[Y_i^2] \notag\\
    &\phantom{\leq \sum_{i=1}^n f(}
        - 2\alpha_1 \mathbb{E}[S_i \hat{S}_i]
        - 2\alpha_2 \mathbb{E}[S_i Y_i]
        + 2\alpha_1 \alpha_2 \mathbb{E}[\hat{S}_i Y_i]
    \Bigr) \notag\\
    &\leq n\, f\Bigl(
        \frac{1}{n} \sum_{i=1}^n 
        \Bigl\{
            \mathbb{E}[S_i^2]
            + \alpha_1^2 \mathbb{E}[\hat{S}_i^2]
            + \alpha_2^2 \mathbb{E}[Y_i^2] \notag\\
    &\phantom{\leq n\, f(}
            - 2\alpha_1 \mathbb{E}[S_i \hat{S}_i]
            - 2\alpha_2 \mathbb{E}[S_i Y_i]
            + 2\alpha_1 \alpha_2 \mathbb{E}[\hat{S}_i Y_i]
        \Bigr\}
    \Bigr).
\end{align*}

\noindent with the ultimate step being justified by Jensen's inequality. Choosing $(\alpha_1, \alpha_2)$ to minimize the above and using~\eqref{eq:Kconverse}:
\begin{align}
    &\sum_{i=1}^n h(S_i \mid \hat{S}_i, Y_i, G_i = G) 
    \leq n\, f(\sigma_{S|\hat{S},Y}^2), 
    \quad \text{where} \label{eq:r1mur2:2}\\
    &\sigma_{S|\hat{S},Y}^2 
    = \frac{K_{22} \sigma_z^2}{
        G^2 (1 \!-\! \rho_1^2) K_{00} 
        \!+\! K_{22} 
        \!+\! \sigma_z^2 
        \!+\! 2G \rho_2 \sqrt{K_{00} K_{22}}
    }. \notag
\end{align}

Similarly, the other term in~\eqref{eq:r1mur2:1} can be upper bounded as 
$\sum_{i=1}^n h(Y_i|G_i=G) \leq n\, f(\sigma_Y^2)$, 
where
\begin{align}
    \sigma_Y^2 
    &= G^2  K_{00} + K_{11} + K_{22} + \sigma_z^2 
    + 2G  \rho_1  \sqrt{K_{00}  K_{11}} \notag\\
    &\phantom{wwww}+ 2 G  \rho_2  \sqrt{\!K_{00} \! K_{22}}. \label{eq:outb1} 
\end{align}

Using~\eqref{eq:r1mur2:1}--\eqref{eq:outb1} and letting $n \to \infty$ so that $\epsilon_n \to 0$, we conclude that $(R, D)$ is achievable only if there exist 
$K_{jj} \geq 0$ for $j = 0, 1, 2$ and $\rho_1, \rho_2 \in [-1,1]$ obeying~\eqref{eq:ConverseConditions1}--\eqref{eq:ConverseConditions5} and~\eqref{eq:ConverseRupperbound} (shown at the beginning of the following page). 
Combining~\eqref{eq:ConverseConditions1}--\eqref{eq:ConverseConditions5} 
with~\eqref{eq:ConverseRupperbound}, the converse proof of Theorem~\ref{thm:mainN} is complete, as the outer bound takes the same functional form as expression~\eqref{eq:ra} in the achievability proof.


\noindent \textbf{Numerical Example:} The fading process $G^n$ is modeled as an i.i.d.\ Rayleigh sequence with probability density
\begin{align*}
p_{G}(g) = 2g e^{-g^2}, \quad g \ge 0.
\end{align*}
Using this model, Fig.~2 depicts the rate–distortion region (blue curve) for parameter values $\mathrm{P}=2.5$, $Q=1$, and $\sigma_Z^2=1$.
For comparison, Fig.~\ref{fig:rd} (red curve) shows the rate–distortion tradeoff for a non-fading channel with the same average signal-to-noise ratio (SNR). 
Theorem~\ref{thm:mainN} (shown via the blue curve) exhibits a compressed rate–distortion region relative to the static (non-fading) channel, showing that fading reduces achievable rates uniformly across all distortion levels. The loss in rate relative to the non-fading case arises primarily because the transmitter cannot fully exploit instantaneous channel variations when enforcing common reconstruction, even with perfect knowledge of the fading coefficients. 
This highlights that fading introduces an inherent penalty on both communication and estimation efficiency due to ergodic averaging under the common reconstruction constraint.
\begin{figure}[]
\begin{center}
\vspace{-6mm}
\scalebox{1.0}{\begin{tikzpicture}
  \begin{axis}[
    width=1.05\columnwidth,
height=0.8\columnwidth,
    xmin=0, xmax=1.0,
    ymin=0.1, ymax=0.8,
    xlabel={Rate $R$ (bits/channel-use)},
    ylabel={Distortion $D$},
    xmajorgrids=true, ymajorgrids=true,
    grid style={dashed, gray!30},
    legend pos=south east,
    label style={font=\small},
    ticklabel style={font=\small},
    legend style={
legend pos=north west,legend cell align=left,row sep=-0.5ex,grid style={dashed},font=\small
}
  ]

  \addplot [mark=triangle*, mark size=1.5,  mark repeat=5, mark phase=4, line width=1.0pt, blue] table [x=R, y=D] {%
R D
-1.49376881 0.02000000
-1.25307432 0.02806700
-1.07499918 0.03613400
-0.93305988 0.04420200
-0.81572308 0.05226900
-0.71601957 0.06033600
-0.62952133 0.06840300
-0.55295184 0.07647100
-0.48411627 0.08453800
-0.42203501 0.09260500
-0.36558358 0.10067200
-0.31389719 0.10873900
-0.26629638 0.11680700
-0.22204185 0.12487400
-0.18047096 0.13294100
-0.14165914 0.14100800
-0.10530157 0.14907600
-0.07114140 0.15714300
-0.03896021 0.16521000
-0.00857067 0.17327700
0.02018909 0.18134500
0.04770515 0.18941200
0.07408564 0.19747900
0.09919586 0.20554600
0.12313242 0.21361300
0.14598125 0.22168100
0.16781910 0.22974800
0.18871483 0.23781500
0.20873044 0.24588200
0.22792197 0.25395000
0.24634021 0.26201700
0.26433498 0.27008400
0.28171535 0.27815100
0.29844094 0.28621800
0.31454730 0.29428600
0.33006710 0.30235300
0.34503042 0.31042000
0.35946503 0.31848700
0.37339663 0.32655500
0.38684902 0.33462200
0.39984431 0.34268900
0.41240304 0.35075600
0.42454436 0.35882400
0.43655271 0.36689100
0.44820773 0.37495800
0.45948976 0.38302500
0.47041350 0.39109200
0.48099273 0.39916000
0.49124039 0.40722700
0.50116863 0.41529400
0.51078888 0.42336100
0.52011190 0.43142900
0.52914783 0.43949600
0.53790621 0.44756300
0.54639608 0.45563000
0.55462592 0.46369700
0.56260379 0.47176500
0.57039333 0.47983200
0.57812266 0.48789900
0.58561740 0.49596600
0.59288400 0.50403400
0.59992857 0.51210100
0.60675690 0.52016800
0.61337445 0.52823500
0.61978639 0.53630300
0.62599760 0.54437000
0.63201269 0.55243700
0.63783603 0.56050400
0.64347171 0.56857100
0.64892361 0.57663900
0.65419536 0.58470600
0.65929038 0.59277300
0.66421188 0.60084000
0.66896283 0.60890800
0.67354603 0.61697500
0.67796407 0.62504200
0.68233794 0.63310900
0.68660139 0.64117600
0.69070293 0.64924400
0.69464437 0.65731100
0.69842732 0.66537800
0.70205322 0.67344500
0.70552327 0.68151300
0.70883850 0.68958000
0.71199971 0.69764700
0.71500750 0.70571400
0.71786223 0.71378200
0.72056405 0.72184900
0.72311285 0.72991600
0.72550826 0.73798300
0.72774963 0.74605000
0.72983603 0.75411800
0.73176620 0.76218500
0.73353853 0.77025200
0.73515103 0.77831900
0.73660131 0.78638700
0.73788648 0.79445400
0.73900316 0.80252100
0.73994738 0.81058800
0.74071450 0.81865500
0.74129911 0.82672300
0.74169495 0.83479000
0.74189472 0.84285700
0.74193153 0.85092400
0.74179500 0.85899200
0.74142805 0.86705900
0.74081683 0.87512600
0.73994497 0.88319300
0.73879290 0.89126100
0.73733707 0.89932800
0.73554881 0.90739500
0.73339271 0.91546200
0.73082432 0.92352900
0.72778674 0.93159700
0.72420527 0.93966400
0.71997870 0.94773100
0.71496431 0.95579800
0.70894958 0.96386600
0.70159304 0.97193300
0.69227897 0.98000000
};
  \addlegendentry{Theorem~\ref{thm:mainN}}

  \addplot [mark=triangle*, mark size=1.5,  mark repeat=5, mark phase=4, line width=1.0pt, red] table [x=R, y=D] {%
R D
-1.36430850 0.02000000
-1.12331380 0.02806700
-0.94372842 0.03613400
-0.80128186 0.04420200
-0.68364061 0.05226900
-0.58329345 0.06033600
-0.49551396 0.06840300
-0.41809474 0.07647100
-0.34896563 0.08453800
-0.28662212 0.09260500
-0.22939580 0.10067200
-0.17668835 0.10873900
-0.12812494 0.11680700
-0.08315704 0.12487400
-0.04133867 0.13294100
-0.00230230 0.14100800
0.03473299 0.14907600
0.06973268 0.15714300
0.10271310 0.16521000
0.13386388 0.17327700
0.16334896 0.18134500
0.19131102 0.18941200
0.21787504 0.19747900
0.24341705 0.20554600
0.26804832 0.21361300
0.29156132 0.22168100
0.31403404 0.22974800
0.33553650 0.23781500
0.35613178 0.24588200
0.37587689 0.25395000
0.39482356 0.26201700
0.41301885 0.27008400
0.43091416 0.27815100
0.44818323 0.28621800
0.46480905 0.29428600
0.48082500 0.30235300
0.49626183 0.31042000
0.51114796 0.31848700
0.52550966 0.32655500
0.53937131 0.33462200
0.55275555 0.34268900
0.56568343 0.35075600
0.57848176 0.35882400
0.59093339 0.36689100
0.60297612 0.37495800
0.61462608 0.38302500
0.62589838 0.39109200
0.63680720 0.39916000
0.64736585 0.40722700
0.65758684 0.41529400
0.66748194 0.42336100
0.67706224 0.43142900
0.68633817 0.43949600
0.69531960 0.44756300
0.70433072 0.45563000
0.71306409 0.46369700
0.72152336 0.47176500
0.72971642 0.47983200
0.73765073 0.48789900
0.74533334 0.49596600
0.75277093 0.50403400
0.75996979 0.51210100
0.76693591 0.52016800
0.77367491 0.52823500
0.78019213 0.53630300
0.78649259 0.54437000
0.79258105 0.55243700
0.79846804 0.56050400
0.80440381 0.56857100
0.81013526 0.57663900
0.81566613 0.58470600
0.82099992 0.59277300
0.82613991 0.60084000
0.83108916 0.60890800
0.83585049 0.61697500
0.84042652 0.62504200
0.84481964 0.63310900
0.84903205 0.64117600
0.85306571 0.64924400
0.85692237 0.65731100
0.86060358 0.66537800
0.86411067 0.67344500
0.86744473 0.68151300
0.87060663 0.68958000
0.87359702 0.69764700
0.87658160 0.70571400
0.87941846 0.71378200
0.88207960 0.72184900
0.88456460 0.72991600
0.88687274 0.73798300
0.88900294 0.74605000
0.89095382 0.75411800
0.89272358 0.76218500
0.89431004 0.77025200
0.89571055 0.77831900
0.89692196 0.78638700
0.89794057 0.79445400
0.89876205 0.80252100
0.89938137 0.81058800
0.89979271 0.81865500
0.89998933 0.82672300
0.89996341 0.83479000
0.89970593 0.84285700
0.89920641 0.85092400
0.89845266 0.85899200
0.89743045 0.86705900
0.89612304 0.87512600
0.89451066 0.88319300
0.89256974 0.89126100
0.89027193 0.89932800
0.88758273 0.90739500
0.88445956 0.91546200
0.88087406 0.92352900
0.87674929 0.93159700
0.87196669 0.93966400
0.86640279 0.94773100
0.85988349 0.95579800
0.85214961 0.96386600
0.84278470 0.97193300
0.83109493 0.98000000
};
\addlegendentry{No fading}
\end{axis}
\end{tikzpicture}}
\end{center}
\caption{Illustration of the rate-distortion region in Theorem \ref{thm:mainN}. \label{fig:rd}}
\end{figure}
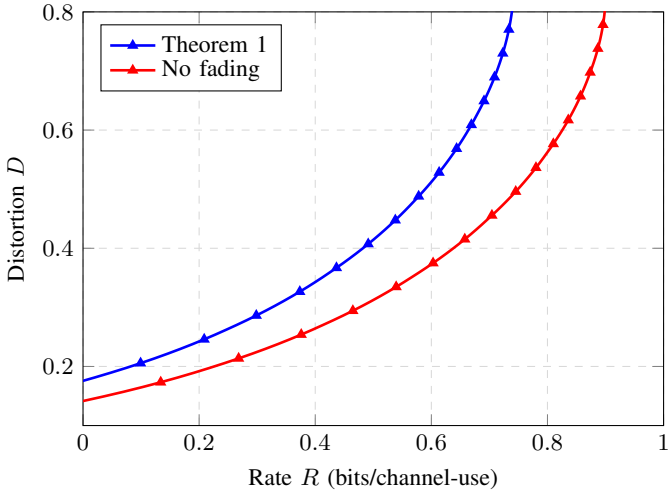

To characterize the distortion region, we can also equivalently compute the minimum average transmit power required to achieve a given distortion $D$ at the receiver while supporting a message rate $R$, denoted by $\mathrm{P}(R,D)$. The resulting power–distortion tradeoff for different values of $R$ is illustrated in Fig.~\ref{fig:rd1}, for the case $Q=1$ and $\sigma_z^2=1$. From the plots, it is observed that for a fixed distortion, achieving a higher rate requires higher transmit power, while for a fixed rate, attaining lower distortion also demands increased power. The convexity of the tradeoff illustrates diminishing returns, in that achieving very low distortion demands a disproportionately large increase in transmit power. The monotonic trend across $R$ further confirms the fundamental coupling between information transmission and estimation fidelity inherent to the CR constraint. Together, these results demonstrate how fading impacts the communication–reconstruction tradeoff for CR.
\begin{figure}[]
\begin{center}
\vspace{-6mm}
\scalebox{1.0}{\begin{tikzpicture}
  \begin{axis}[
    width=1.05\columnwidth,
height=0.8\columnwidth,
    xmin=0, xmax=0.95, ymin=1e-4, ymax=100,
    ymode=log,
    xlabel={Distortion $D$}, ylabel={Required power $\mathrm{P}$},
    xmajorgrids=true, ymajorgrids=true,
    grid style={dashed, gray!30}, legend pos=south west,
    label style={font=\small}, ticklabel style={font=\small}, legend style={font=\small}]

  \addplot [thick, mark=*, color=orange] table [x=D, y=P] {%
D P
0.020000 69.976837
0.044615 22.815019
0.069231 11.785977
0.093846 7.415864
0.118462 5.017185
0.143077 3.637792
0.167692 2.722265
0.192308 2.111914
0.216923 1.672462
0.241538 1.348976
0.266154 1.110939
0.290769 0.921730
0.315385 0.775246
0.340000 0.659279
0.364615 0.561623
0.389231 0.482277
0.413846 0.421242
0.438462 0.366311
0.463077 0.323586
0.487692 0.286965
0.512308 0.250344
0.536923 0.219826
0.561538 0.195412
0.586154 0.177102
0.610769 0.152688
0.635385 0.140481
0.660000 0.122170
0.684615 0.109963
0.709231 0.097756
0.733846 0.085549
0.758462 0.073342
0.783077 0.061135
0.807692 0.055032
0.832308 0.048928
0.856923 0.036721
0.881538 0.030618
0.906154 0.024514
0.930769 0.018411
0.955385 0.012307
0.980000 0.006204
};
  \addlegendentry{$R=0.00$}

  \addplot [thick, mark=*, color=red] table [x=D, y=P] {%
D P
0.044615 34.649724
0.069231 17.999350
0.093846 11.541837
0.118462 8.154389
0.143077 6.012057
0.167692 4.663181
0.192308 3.717137
0.216923 3.027441
0.241538 2.526953
0.266154 2.124121
0.290769 1.818946
0.315385 1.580909
0.340000 1.385597
0.364615 1.220802
0.389231 1.092628
0.413846 0.982765
0.438462 0.891212
0.463077 0.811867
0.487692 0.744728
0.512308 0.689797
0.536923 0.640968
0.561538 0.592140
0.586154 0.555519
0.610769 0.525002
0.635385 0.494484
0.660000 0.470070
0.684615 0.451760
0.709231 0.433449
0.733846 0.415139
0.758462 0.402932
0.783077 0.390725
0.807692 0.384621
0.832308 0.378518
0.856923 0.378518
0.881538 0.378518
0.906154 0.378518
0.930769 0.390725
0.955385 0.409035
0.980000 0.451760
};
  \addlegendentry{$R=0.20$}

  \addplot [thick, mark=*, color=blue] table [x=D, y=P] {%
D P
0.044615 51.281787
0.069231 27.172924
0.093846 17.541586
0.118462 12.640468
0.143077 9.607024
0.167692 7.519624
0.192308 6.121920
0.216923 5.108737
0.241538 4.303074
0.266154 3.698827
0.290769 3.241064
0.315385 2.844335
0.340000 2.533056
0.364615 2.282813
0.389231 2.069190
0.413846 1.886084
0.438462 1.733497
0.463077 1.605323
0.487692 1.501563
0.512308 1.403907
0.536923 1.318458
0.561538 1.245216
0.586154 1.184181
0.610769 1.135353
0.635385 1.092628
0.660000 1.049904
0.684615 1.019386
0.709231 0.988869
0.733846 0.964455
0.758462 0.946144
0.783077 0.933937
0.807692 0.927833
0.832308 0.921730
0.856923 0.921730
0.881538 0.927833
0.906154 0.946144
0.930769 0.970558
0.955385 1.019386
0.980000 1.098732
};
  \addlegendentry{$R=0.40$}

  \addplot [thick, mark=*, color=green] table [x=D, y=P] {%
D P
0.044615 75.402857
0.069231 40.966856
0.093846 26.324537
0.118462 19.043050
0.143077 14.752283
0.167692 11.816494
0.192308 9.637542
0.216923 8.111664
0.241538 6.988618
0.266154 6.079195
0.290769 5.334567
0.315385 4.754734
0.340000 4.290867
0.364615 3.894139
0.389231 3.552343
0.413846 3.271581
0.438462 3.039648
0.463077 2.838232
0.487692 2.655127
0.512308 2.502539
0.536923 2.368262
0.561538 2.258399
0.586154 2.160742
0.610769 2.075293
0.635385 2.002051
0.660000 1.934913
0.684615 1.879981
0.709231 1.831153
0.733846 1.794532
0.758462 1.764014
0.783077 1.745704
0.807692 1.733497
0.832308 1.727393
0.856923 1.727393
0.881538 1.739600
0.906154 1.764014
0.930769 1.800635
0.955385 1.867774
0.980000 1.983741
};
  \addlegendentry{$R=0.60$}

  \end{axis}
\end{tikzpicture}}
\end{center}
\caption{Power-distortion trade-off for various rates. \label{fig:rd1}}
\end{figure}
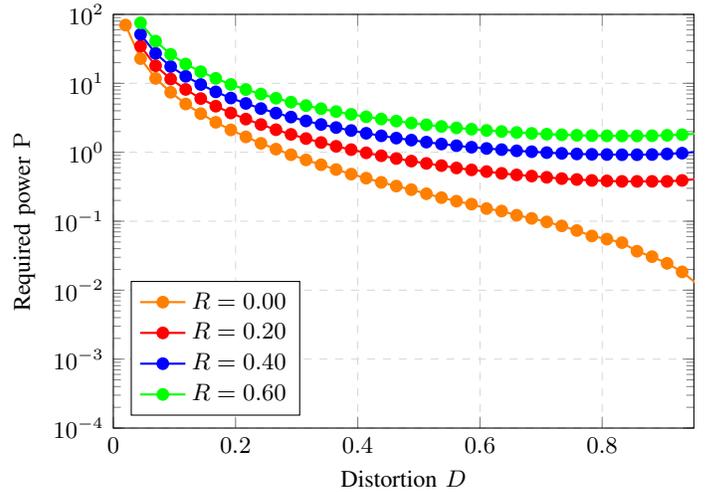

\section{Conclusion}\label{sec:concl}
This letter derived the optimal rate–distortion trade-off for a state-dependent fading Gaussian channel with joint communication and common reconstruction requirements. Future work may explore extensions to multi-user fading scenarios and the design of optimal power adaptation strategies across fading.

\bibliographystyle{IEEEtran}
\bibliography{mybib.bib}
\nocite{ramachandran2018communication}
\nocite{ramachandran2022joint}
\end{document}